\def\BibTeX{{\rm B\kern-.05em{\sc i\kern-.025em b}\kern-.08em
    T\kern-.1667em\lower.7ex\hbox{E}\kern-.125emX}}
\newtheorem{theorem}{Theorem}
\theoremstyle{definition}
\newtheorem{lemma}[theorem]{Lemma}
\theoremstyle{definition}
\newtheorem{exmp}{Example}[section]    
\begin{document}

\title{TreeExplorer: a coding algorithm for rooted trees with application to wireless and ad hoc routing}

\author{\IEEEauthorblockN{Amirmohammad Farzaneh\IEEEauthorrefmark{1},
Mihai-Alin Badiu\IEEEauthorrefmark{2}, and Justin P. Coon\IEEEauthorrefmark{3}}
\IEEEauthorblockA{Department of Engineering Science,
University of Oxford\\
Oxford, UK\\
Email: \IEEEauthorrefmark{1}amirmohammad.farzaneh@eng.ox.ac.uk,
\IEEEauthorrefmark{2}mihai.badiu@eng.ox.ac.uk,
\IEEEauthorrefmark{3}justin.coon@eng.ox.ac.uk}}

\maketitle

\begin{abstract}
Routing tables in ad hoc and wireless routing protocols can be represented using rooted trees. The constant need for communication and storage of these trees in routing protocols demands an efficient rooted tree coding algorithm. This efficiency is defined in terms of the average code length, and the optimality of the algorithm is measured by comparing the average code length with the entropy of the source. In this work, TreeExplorer is introduced as an easy-to-implement and nearly optimal algorithm for coding rooted tree structures. This method utilizes the number of leaves of the tree as an indicator for choosing the best method of coding.  We show how TreeExplorer can improve existing routing protocols for ad hoc and wireless systems, which normally entails a significant communication overhead.
\end{abstract}

\begin{IEEEkeywords}
tree coding, routing protocols, entropy, information theory
\end{IEEEkeywords}

\section{Introduction}

Trees can be seen in many areas of science and technology. Some of the most well-known application areas of trees are phylogenetic trees \cite{felsenstein2004inferring}, XML documents \cite{dykes2011xml}, and binary search trees \cite[Ch.~12]{cormen2022introduction} . They can also be seen in network routing protocols. In path-vector routing protocol \cite{medhi2017network}, which is used in the Border Gateway Protocol \cite{sobrinho2003network}, each node maintains a table that contains all the hops in the shortest path from that node to all other nodes in the network. As a result, it can be shown that path-vector routing tables are essentially rooted trees.

Routing is of great importance in ad hoc and wireless networks. Some of the existing methods for routing in ad hoc networks explicitly make use of the path-vector routing protocol \cite{chau2008inter, rangarajan2004using}. Other methods that do not specify the use of such routing tables can also benefit from replacing link state or distance vector routing tables with path-vector routing tables to avoid looping \cite{zhang2007ad}. This observation conveys the inherent existence and use of rooted trees in ad hoc and wireless routing protocols. Additionally, some ad hoc routing protocols are entirely based on rooted trees in order to be scalable \cite{ioannidis2004scalable, ioannidis2005high}. Therefore, it can be seen that rooted trees play a significant role in ad hoc and wireless routing protocols.

Trees (and graphs in general) are known to be complex data structures \cite{farzaneh2021kolmogorov}. This complexity might not be felt in networks with small number of nodes. However, networks are dynamic entities \cite{farzaneh2022information}, and we are rapidly moving towards extremely big networks. With the emergence of IoT and 6G, the number of nodes is soon expected to be around 10 million devices per $\text{km}^2$ \cite{nakamura20205g}. This fast growth will lead to more complex routing tables, which means significantly bigger rooted trees corresponding to each routing table.

In all of the application areas that were mentioned above for trees, they are stored in a database, communicated through a channel, or both. This would necessitate the need of coding the trees at hand for storage and communication purposes. It is known that routing protocols in ad hoc and wireless systems add a  significant overhead to the communications in the system \cite[Ch.~16]{goldsmith2005wireless}, and this overhead will only grow larger with the exponential growth of the size of networks over time. As routing tables can be represented using rooted trees, we look for coding algorithms for rooted trees for use in routing protocols. If the trees are coded in the most compact way possible, the communication overhead in these systems will be significantly reduced. Consequently, the aim of this paper is to provide an efficient coding algorithm for rooted trees. Additionally, we will study the underlying structure of the tree independently of the node labels. It is shown that this will improve the performance of the algorithm.

In this paper, we introduce Pit-climbing, Tunnel-digging, and TreeExplorer as novel coding algorithms for rooted trees. As a hybrid of Pit-climbing and Tunnel-digging, TreeExplorer will be our ultimate proposed method. The performance of the algorithm is evaluated using information theoretic metrics, and simulation results illustrate its near-optimal performance. Ultimately, we discuss how TreeExplorer can be used to reduce the communication overhead in ad hoc and wireless routing.

\section{Tree sources}
\label{sources}

In this paper, we consider trees to be rooted and unlabeled. We use the terms `unlabelled tree' and `tree structure' interchangeably. We refer to movements along the edges towards the root as \textit{upwards}, and \textit{downwards} otherwise. Every node, except for the root, has a unique \textit{parent}, which is the node directly connected to it in the upward direction. We use the term \textit{siblings} to refer to a group of nodes that share the same parent. By \textit{depth} of a node we mean the length of the shortest path from that node to the root. The children of each node are unordered unless explicitly mentioned. For unordered trees, we always arrange the children of each node from left to right using a mapping from unlabeled unordered rooted trees to unlabeled ordered rooted trees. The term \textit{entropy} is used in this paper to refer to Shannon's entropy \cite{shannon1948mathematical}. All the entropies in this paper are calculated in base two (bits).

We are interested in calculating the entropy of a uniform source for rooted tree structures. Only uniform sources are considered as the authors were not able to find random generation models designed solely for tree structures. The sequence of the number of unlabeled unordered rooted trees with $n$ nodes is listed on the on-line encyclopedia of integer sequences \cite[A000081]{oeis}. Table \ref{sequence} shows the possible trees of this kind that can be built with up to three nodes. It is known that the asymptotic limit of this sequence is $cd^nn^{-3/2}$ \cite{polya2012combinatorial}, where $c$ and $d$ are constants that can be found at \cite[A187770]{oeis} and \cite[A051491]{oeis}, respectively. Consequently, the asymptotic uniform entropy for rooted tree structures with $n$ nodes can be calculated using the following equation.

\begin{align}
\begin{split}
\label{asymp_entropy}
    H(G)&\sim n\log_2d-1.5\log_2n+\log_2c\\
    &\approx 1.5635n-1.5\log_2n-1.1846
\end{split}
\end{align}

Eq. (\ref{asymp_entropy}) shows that the growth rate of the entropy of a uniformly distributed unlabeled unordered rooted tree source is asymptotically linear. This statement also holds for unlabeled ordered rooted trees, as their number matches the sequence of Catalan numbers \cite[Ch.~8]{koshy2008catalan}. Additionally, the number of labeled (unrooted) trees is shown to be $n^{n-2}$ \cite[p.~26]{cayley_2009}, which implies that $\sim n\log_2n$ bits are needed to code random labeled trees with uniform distribution.

\begin{table}
\centering
\caption{Possible unlabeled unordered rooted trees of up to three nodes}
\label{sequence}
\begin{tabular}{ |c| c| c| }
\hline
 $\boldsymbol{n}$ & \textbf{Possible trees} & \textbf{Count}\\ 
 \hline
 1 & \begin{minipage}{.4\columnwidth}
      \includegraphics[width=\linewidth]{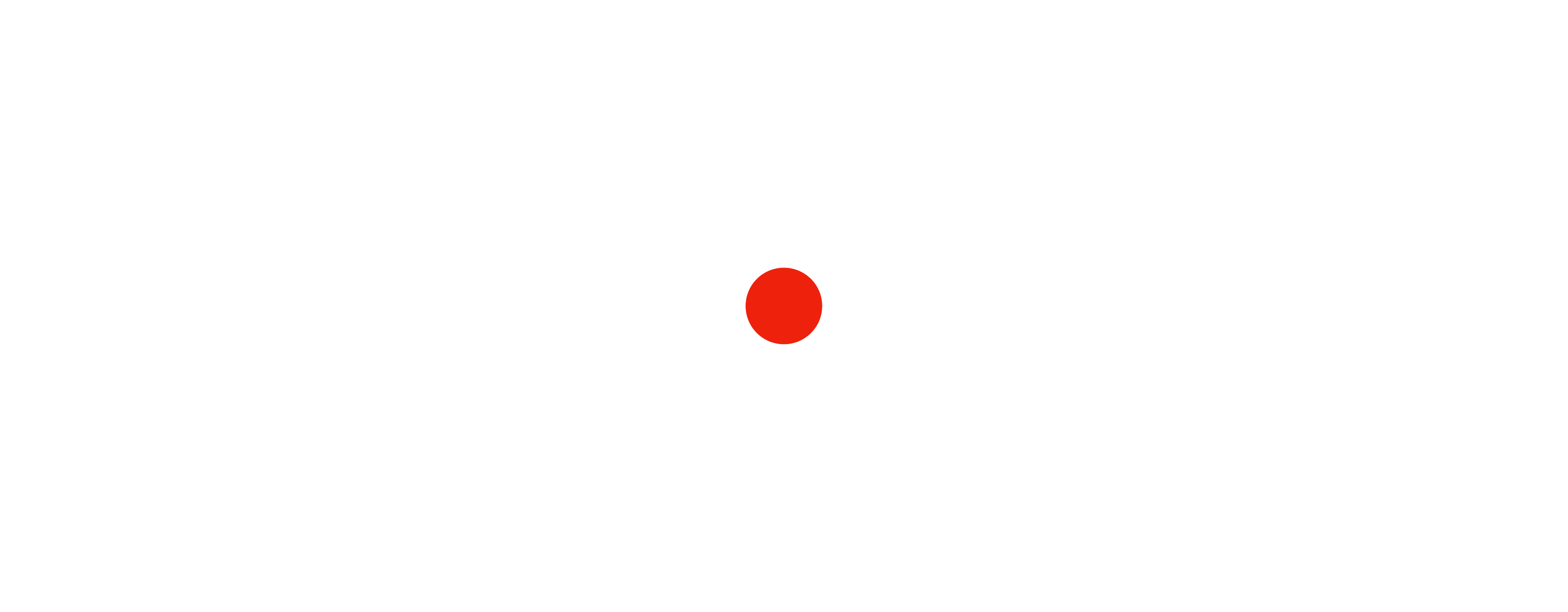}
    \end{minipage}
& 1\\
\hline
2 &  \begin{minipage}{.4\columnwidth}
      \includegraphics[width=\linewidth]{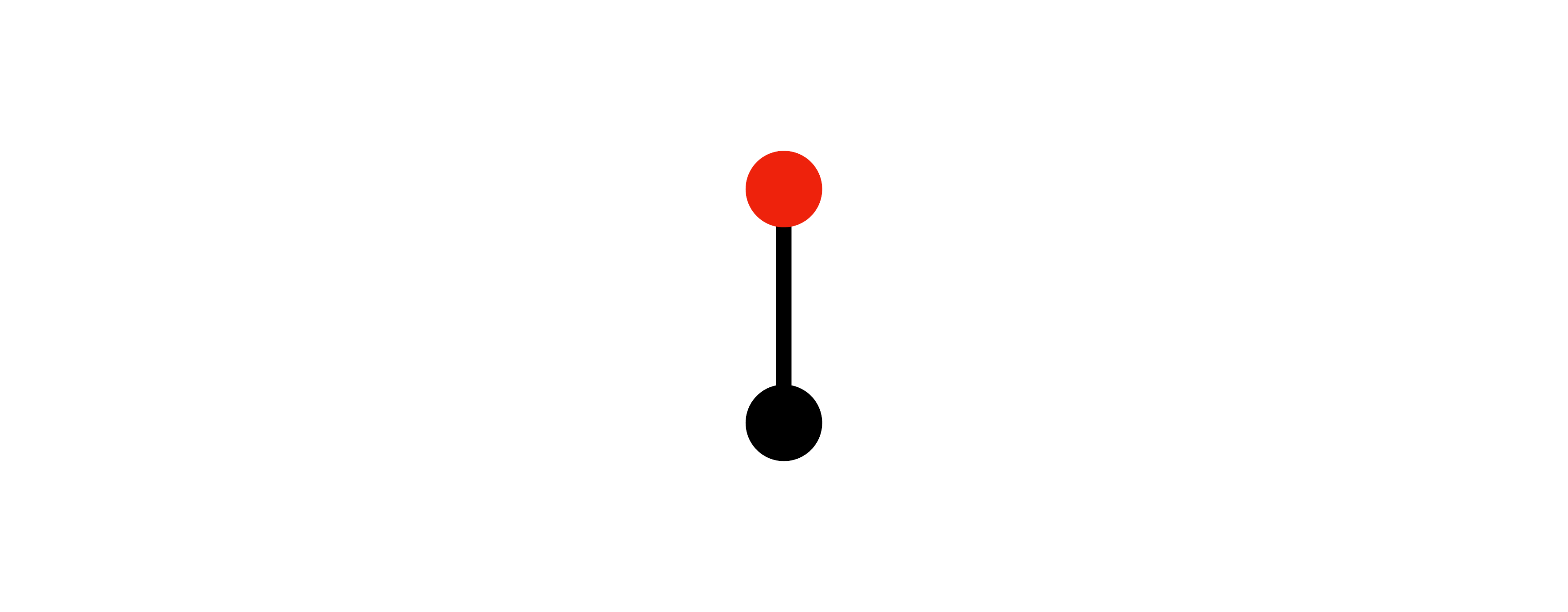}
    \end{minipage} & 1\\
    \hline
3 &  \begin{minipage}{.4\columnwidth}
      \includegraphics[width=\linewidth]{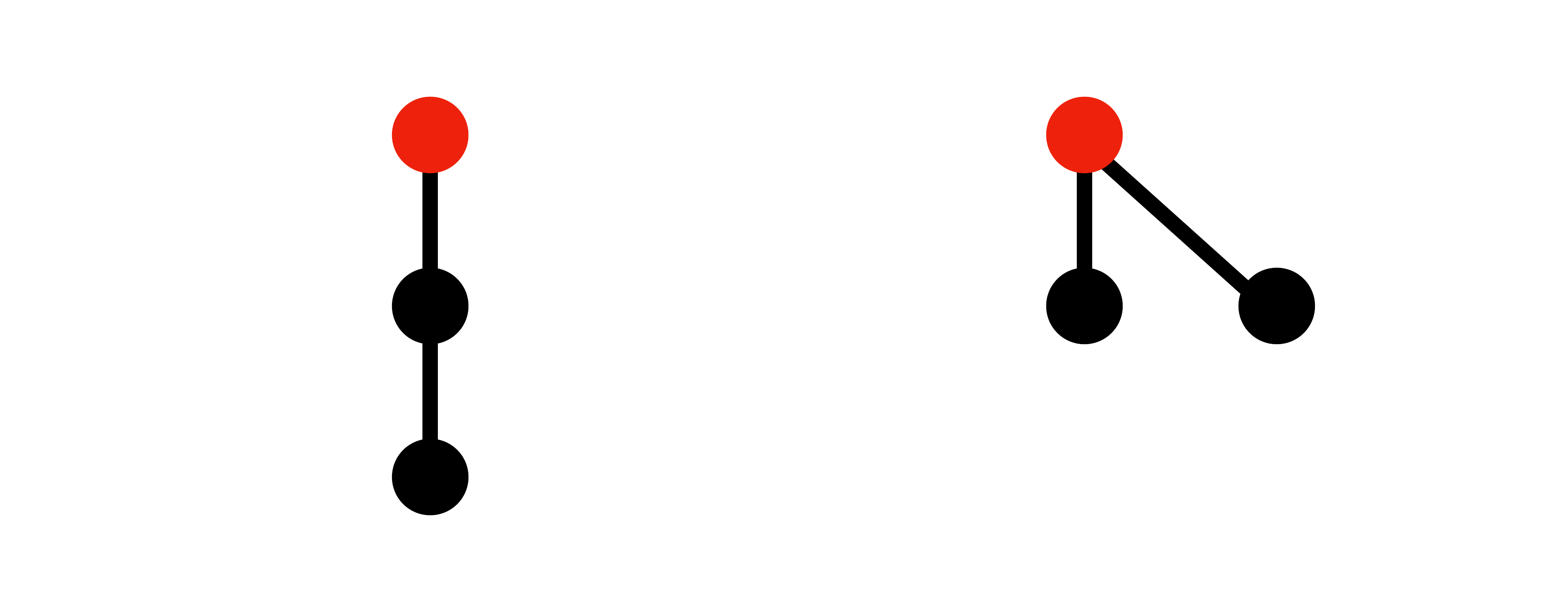}
    \end{minipage} & 2\\
    \hline
\end{tabular}
\end{table}

\section{Pit-climbing algorithm}
\label{pit-climbing}
In this section, we introduce a novel tree structure coding algorithm that we call \textit{pit-climbing}. We use this term because of the analogy between the proposed method, and a climber that has been trapped in a pit and wants to climb up.

\textbf{Ternary pit-climbing algorithm (TPC): } We start traversing the tree from the leftmost leaf. We log our tree traversal using three symbols: $\uparrow$, $\Uparrow$ and $\downarrow$. Anytime that we are at a leaf, we take the only possible path, which is upwards. If we take an upward path at any point from an edge, we consider that edge and the subtree below it as deleted (or filled-in) from the original tree so that we do not explore it again. Additionally, we log this upward movement in our code. If we have moved to a node that we have never been to before, we log a $\uparrow$ in the code. Otherwise, if we move upwards to a node that we have seen before, we log it with a $\Uparrow$. When we reach a node that is not a leaf, we look at the leaves of the rooted subtree whose root is the node we are currently at. We then take the path downwards that falls into the leftmost leaf of that subtree. We log this entire fall with a single $\downarrow$. We continue exploring the tree and logging the code in the same manner until we reach the root of the tree and there is no other edge to fall into.

We will clarify TPC with the following example.

\begin{exmp}
    Assume that we are given the rooted tree structure of Fig. \ref{tpc}, where the red node shows the root. The starting point of the algorithm is indicated, and the arrows show the path that PC takes. The orange, green, and blue arrows are used to show $\downarrow$, $\uparrow$, and $\Uparrow$, respectively.
    \begin{figure}[H]
        \centering
        \includegraphics[width=\columnwidth]{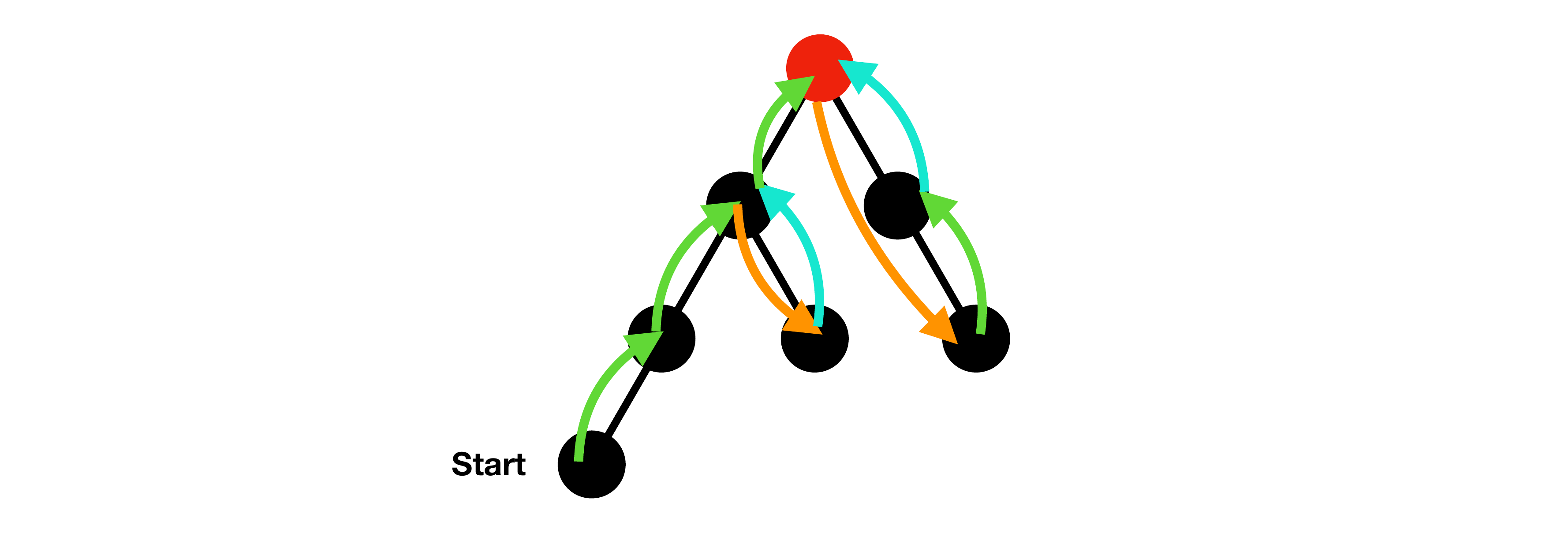}
        \caption{Running TPC on a sample tree}
        \label{tpc}
    \end{figure}
    
\end{exmp}

In source coding, usually a binary code is preferred over a ternary code, as most of our systems for storage and communication are binary-based. To transform TPC codes into binary, we look back at the definition of the symbols. We observe that we can never have consecutive $\downarrow$s. This is because whenever we fall, we fall down to a leaf, so we can never fall twice or more. We make use of this fact, and assign 0 to $\downarrow$, and 00 to $\Uparrow$. We also use 1 to represent $\uparrow$. We call this new binary code for rooted tree structures simply \textit{pit-climbing} (PC). Even though this method of coding does not provide us with an instantaneous code, we claim that PC creates uniquely decodable codes. Theorem \ref{uniquely} proves this statement.

\begin{theorem}
\label{uniquely}
Codes generated by pit-climbing are uniquely decodable.
\end{theorem}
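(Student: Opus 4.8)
The plan is to show that the binary string produced by PC can be parsed back into the ternary TPC string uniquely, and then invoke the (essentially already-argued) fact that the TPC symbol sequence determines the tree. Since TPC uses the alphabet $\{\uparrow, \Uparrow, \downarrow\}$ with the encoding $\downarrow \mapsto 0$, $\Uparrow \mapsto 00$, $\uparrow \mapsto 1$, the only source of ambiguity is distinguishing a lone $0$ (a $\downarrow$) from the pair $00$ (a $\Uparrow$), and telling where one such block ends and the next symbol begins. So the heart of the argument is a structural observation about which ternary strings can actually arise as TPC traversals: I would first establish that in any TPC code, two $\downarrow$ symbols are never adjacent (already noted in the text — whenever we fall we land on a leaf, from which the only move is upward), and moreover that the code never ends in a $\downarrow$ (the traversal terminates at the root after an upward move, having no edge left to fall into) and never begins with $\Uparrow$ (the first move from the leftmost leaf is an $\uparrow$ into a previously unvisited node). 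These constraints pin down the grammar of admissible TPC strings.

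Given those constraints, I would prove unique decodability by exhibiting an explicit left-to-right parser and showing it never has a genuine choice. Read the binary string from the left. On seeing a $1$, emit $\uparrow$ and advance one bit. On seeing a $0$, look at the run of consecutive $0$s starting there: because two $\downarrow$'s cannot be adjacent, a maximal run of $0$s of length $\ell$ must decompose into some number of $\Uparrow$'s ($00$) possibly followed by a single trailing $\downarrow$ ($0$), and since every $\downarrow$ must be followed (within the whole string) by a $1$ — it cannot be followed by another $0$, and it cannot be the last symbol — the parity/position of the run relative to the next $1$ (or end of string) is forced: if $\ell$ is even the run is $\ell/2$ copies of $\Uparrow$; if $\ell$ is odd it is $(\ell-1)/2$ copies of $\Uparrow$ followed by one $\downarrow$. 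Hence the ternary string is recovered with no ambiguity. I would phrase this cleanly by defining a maximal $0$-run and arguing its decomposition is unique, rather than walking through a bit-by-bit automaton.

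Finally, I would close the loop by noting that the recovered ternary TPC string determines the tree structure: this is exactly the content of the (commented-out) Theorem on uniqueness of TPC, whose induction on depth I would reproduce briefly — a one-node tree maps to the empty string, and for depth $k+1$ the code splits canonically into the codes of the subtrees rooted at the root's children, glued by the fixed connectors $\uparrow\downarrow$ (first join) and $\Uparrow\downarrow$ (subsequent joins), each subtree having depth at most $k$ and hence decodable by the inductive hypothesis. Composing ``binary $\to$ ternary'' (injective, by the parser above) with ``ternary $\to$ tree'' (injective, by the induction) gives that PC is uniquely decodable.

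The main obstacle is making the $0$-run decomposition argument airtight at the boundary cases — in particular arguing rigorously that a $\downarrow$ can never be the final symbol of a TPC code and that $\Uparrow$ never occurs with ``nothing before it'' in a way that would let a leading $00$ be misread. Once the admissible-string grammar is nailed down precisely (no $\downarrow\downarrow$, no trailing $\downarrow$, no leading $\Uparrow$), the parsing argument is essentially forced; the risk is hand-waving one of those three structural facts, each of which really does need a sentence tying it back to the definition of the traversal.
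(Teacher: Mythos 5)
Your overall architecture (prove the binary-to-ternary parse is unique, then invoke the ternary-to-tree injectivity via induction on depth) is reasonable, and your second stage is essentially the paper's own argument. The first stage, however, contains a genuine error. The claim that ``every $\downarrow$ must be followed (within the whole string) by a $1$'' is false: after a fall we land on a leaf and must move upward, but that upward move is a $\Uparrow$ (encoded $00$) whenever the leaf is a direct child of the node we fell from. The simplest nontrivial tree already violates your rule: a root with two leaf children has TPC code $\uparrow\downarrow\Uparrow$, i.e., PC code $1\cdot 0\cdot 00 = 1000$, in which the $\downarrow$ is immediately followed by two more $0$s. Consequently your run-decomposition rule (``some $\Uparrow$'s possibly followed by a single trailing $\downarrow$'') misparses small trees outright: the three-leaf star has code $\uparrow\downarrow\Uparrow\downarrow\Uparrow = 1000000$, whose maximal run of six zeros your parser would read as $\Uparrow\Uparrow\Uparrow$ instead of the correct $\downarrow\Uparrow\downarrow\Uparrow$.

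The problem is not just a boundary case you could patch: no purely local parser of $0$-runs can succeed, because distinct realizable TPC traversal strings collide in binary. For example, $\uparrow\downarrow\uparrow\downarrow\Uparrow\Uparrow$ (root with a leaf child followed by a child having two leaf children) and $\uparrow\downarrow\uparrow\Uparrow\downarrow\Uparrow$ (root with three children: leaf, a node with one leaf child, leaf) both map to $10100000$. Hence the binary-to-ternary map is not injective on the set of traversal strings, and unique decodability can only hold by exploiting the recursive decomposition of a codeword into the codewords of the root's child subtrees joined by the fixed connectors $10$ and $000$, together with the canonical left-to-right ordering of children that rules out one of the two colliding orderings above. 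That is the route the paper's induction takes (and even there the reliance on the canonical ordering deserves to be made explicit). To repair your proof you would need to replace the run-parsing lemma with an induction at the level of whole codewords showing that the binary string of a canonically ordered tree determines its decomposition into subtree codes.
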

\begin{proof}
We use induction on the depth of the tree. Firstly, the code for a tree with a single node is uniquely decodable ($\emptyset$). Next, assume that we know PC codes for all trees with a depth of $k$ or less are uniquely decodable. For a tree with a depth of $k+1$, we look at the subtrees of the children of the root. Based on the induction, we know that the PC code for all these subtrees are uniquely decodable. The PC code for the original tree is the concatenation of the PC codes of the subtrees, with a connector of $\uparrow\downarrow \equiv 10$ for the first two subtrees, and $\Uparrow\downarrow \equiv 000$ for all other subtrees. In case the root only has one child, the final code will be the code of the subtree rooted at the child, plus an additional $\uparrow \equiv 1$. Therefore, in all of the cases, the PC code of the tree with a depth of $k+1$ can be uniquely decoded.
\end{proof}

Furthermore, we would like to investigate the length of the codewords generated by PC.

\begin{theorem}
\label{PClen}
The PC codeword length for a rooted tree structure with $n$ nodes and $l$ leaves is $n+2l-3$ bits.
\end{theorem}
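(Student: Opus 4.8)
The plan is to count how often each of the three TPC symbols $\uparrow$, $\Uparrow$, $\downarrow$ occurs while traversing the tree, and then pass to bits using that PC encodes $\uparrow$ and $\downarrow$ with one bit each and $\Uparrow$ with two bits; so the codeword length equals $(\#\uparrow) + (\#\downarrow) + 2(\#\Uparrow)$, and it suffices to prove $\#\uparrow = n-l$ and $\#\downarrow = \#\Uparrow = l-1$.

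First I would record two structural facts about the traversal, both following from the rule that an edge is deleted as soon as it is crossed upward (one may formalize the first by induction on subtree depth). (i) The algorithm halts only when it sits at the root with nothing below it left to fall into; this forces every edge to have been deleted, hence every edge is crossed upward exactly once and $\#\uparrow + \#\Uparrow = n-1$. (ii) Once the algorithm leaves a node $v$ by an upward move, the edge joining $v$ to its parent is gone, so $v$ can never be reached again; consequently every visit to a node after its first is via an $\Uparrow$, and, since the traversal must end at the root, each fall out of a subtree is matched by exactly one later climb back into it, giving a bijection between falls and $\Uparrow$s.

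Next I would count the falls. A fall always lands on a leaf of the original tree: a node loses its last undeleted child-edge only at the instant the traversal climbs out of it for the last time, after which (ii) makes it unreachable, so the leftmost reduced-subtree leaf a fall aims at is always a genuine leaf. Each leaf is moreover reached exactly once, because on arrival the edge above it is deleted and it is abandoned; and since an upward move always lands on a node that has a child and is therefore not a leaf, the only ways to arrive at a leaf are to start there (the leftmost leaf) or to fall into it. Hence $\#\downarrow = l-1$. The same dichotomy shows that the first visit to an internal node is necessarily an upward move — it is neither the starting leaf nor a fall target — so it is logged as $\uparrow$; conversely an $\uparrow$ is by definition a move onto a node not previously visited, which must be internal. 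Therefore $\#\uparrow$ equals the number of internal nodes, $n-l$, and by (i) we get $\#\Uparrow = (n-1)-(n-l) = l-1$ (in agreement with the bijection in (ii)).

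Combining, the PC codeword length is $(\#\uparrow) + (\#\downarrow) + 2(\#\Uparrow) = (n-l) + (l-1) + 2(l-1) = n + 2l - 3$; the base cases $n=1$ (empty code) and $n=2$ are immediate. I expect the only real obstacle to be facts (i) and (ii) — arguing rigorously that the traversal fills in the entire tree, visits every node, and never returns to a node once it has climbed away from it. After those are in hand, the leaf/internal-node accounting and the conversion to bits are routine.
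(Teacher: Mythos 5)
Your proposal is correct and follows essentially the same accounting as the paper: both count $\#\downarrow=l-1$ (every leaf except the starting one is fallen into exactly once), use that every edge is climbed upward exactly once so $\#\uparrow+\#\Uparrow=n-1$, and match each fall with a later $\Uparrow$ to conclude $\#\Uparrow=l-1$, giving $(n-1)+(l-1)+(l-1)=n+2l-3$ bits. Your version is somewhat more careful in justifying the structural facts about the traversal (the paper asserts them directly), but the argument is the same.
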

\begin{proof}
Firstly, notice that each $\downarrow$ in the TPC code corresponds to a leaf, as we always fall into a leaf. Additionally, we fall into every leaf except for the one we start the algorithm from exactly once. Therefore, we have $l-1$ $\downarrow$s in the TPC code, which translates into $l-1$ bits in the PC code. Additionally, we climb up each edge of the tree exactly once. Therefore, the number of $\uparrow$s and $\Uparrow$s in the TPC code is equal to the number of edges, which is $n-1$. However, every $\Uparrow$ will translate into 2 bits in the PC code. The number of $\Uparrow$s is equal to the number of $\downarrow$s, as anytime we fall from a node we will have to climb back up to it at some point. Therefore, the number of $\Uparrow$s is also $l-1$, and we will have $l-1$ additional bits when translating the TPC code into a PC code. Consequently, the total number of bits in the PC code will be $l-1+n-1+l-1=n+2l-3$.
\end{proof}

\section{Tunnel-digging algorithm}

Based on Theorem \ref{PClen}, the PC code length will increase with the number of leaves. However, the number of rooted trees with $l$ leaves does not necessarily increase with $l$. Hence, there is no justified reason for having longer codes for trees with more leaves. As a result, another algorithm called \textit{tunnel-digging} is developed to tackle this problem. The PC algorithm is based on traversing a tree along its edges, up and down. Whereas, in TD, the aim is to traverse the tree in a horizontal manner. The name \textit{tunnel-digging} comes from seeing the traversal method of this algorithm as digging tunnels between nodes on the same depth.

\textbf{Ternary tunnel-digging algorithm (TTD):} We start with the leftmost child of the root, and start moving right to nodes with the same depth in the order of the nodes. For each node that we encounter, we log a $\leftarrow$ if it is a leaf, and a $\rightarrow$ otherwise. If at any point we have to move between two nodes that are not siblings, we use a $\Rightarrow$ to show the transition (digging a tunnel!). Additionally, if at any point there are no more nodes on the right to move to, we move to the leftmost node on the level below, and we mark this transition again with a $\Rightarrow$. We continue until all the leaves of the tree are logged in the code.

The following example illustrates running TTD on a sample rooted tree structure.

\begin{exmp}
Assume that we are given the rooted tree structure of Fig. \ref{ttd}, where the red node shows the root. The starting point of the algorithm is indicated, and the arrows show the path that TTD takes. The blue, green, and orange arrows are used to show $\leftarrow$, $\rightarrow$, and $\Rightarrow$, respectively.

\begin{figure}[H]
        \centering
        \includegraphics[width=\columnwidth]{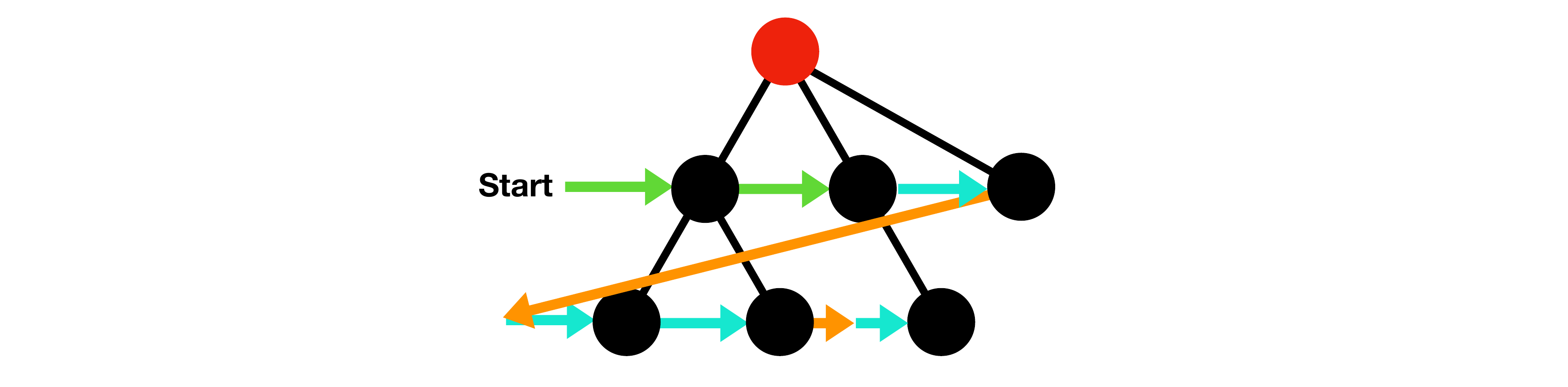}
        \caption{Running TTD on a sample tree}
        \label{ttd}
\end{figure}

\end{exmp}

To transform the TTD code into a binary code which will be called tunnel-digging (TD), we again use the properties of the TTD symbols. We notice that we can never have two consecutive $\Rightarrow$s, as there is always at least one node in between the dug tunnels. Therefore, we use $0$ to represent $\Rightarrow$ and $00$ to represent $\rightarrow$. Additionally, we use $1$ to show $\leftarrow$. Notice that we use a shorter code for leaf nodes, as the number of leaf nodes is expected to be higher than the number of non-leaf nodes when TD is used to code the tree. The proof that the code is uniquely decodable can be done in the same manner as Theorem \ref{uniquely} by replacing $\downarrow$, $\uparrow$, and $\Uparrow$ with $\rightarrow$, $\leftarrow$, and $\Rightarrow$, respectively. The following Theorem calculates the code length of tunnel-digging.

\begin{theorem}
The TD codeword length for a rooted tree structure with $n$ nodes and $l$ leaves is $3n-2l-3$ bits.
\end{theorem}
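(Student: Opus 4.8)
The plan is to follow the template of Theorem~\ref{PClen}: count the three TTD symbols $\leftarrow$, $\rightarrow$, $\Rightarrow$ separately, and then translate into bits via the binary assignment $\leftarrow\equiv 1$, $\rightarrow\equiv 00$, $\Rightarrow\equiv 0$. First, TTD logs exactly one symbol for each of the $n-1$ non-root nodes: the traversal sweeps level by level and, within a level, visits every node once, and since the deepest level of a rooted tree consists solely of leaves, the last symbol logged is always a $\leftarrow$ at maximum depth, so the stopping rule ``continue until all leaves are logged'' forces the traversal to reach every non-root node. A node contributes $\leftarrow$ when it is a leaf and $\rightarrow$ otherwise, and for $n\ge 2$ the root is internal; hence there are exactly $l$ symbols $\leftarrow$ and $n-1-l$ symbols $\rightarrow$.

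Next I would count the $\Rightarrow$ symbols. The key point is that, in the left-to-right, level-by-level order, the children of any fixed parent appear as one contiguous block, so the sequence of visited nodes splits into maximal sibling-blocks, one for every internal node. There are $n-l$ internal nodes, hence $n-l$ blocks, and a $\Rightarrow$ is produced precisely at each passage from one block to the next. This single count absorbs both the within-level jumps between distinct sibling groups and the jumps down to the leftmost node of the level below, because nodes of different depth are never siblings; moreover no $\Rightarrow$ precedes the first block (the children of the root) and none follows the last logged leaf. Therefore the $\Rightarrow$ symbol appears $n-l-1$ times, and substituting into the binary assignment yields the codeword length $1\cdot l+2\cdot(n-1-l)+1\cdot(n-l-1)=3n-2l-3$ bits.

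The step I expect to be the main obstacle is the $\Rightarrow$ count, which is the analogue of the ``arrow bookkeeping'' in Theorem~\ref{PClen}: one has to justify that the sibling-blocks really do occur as contiguous runs of the traversal --- this is where both the left-to-right ordering of children and the level-by-level sweep are used --- and then track the two boundary effects (no tunnel before the first block, none after the final leaf) so that the $n-l$ blocks contribute exactly $n-l-1$ transitions. Everything else is a routine substitution.
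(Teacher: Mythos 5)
Your proof is correct and follows essentially the same route as the paper: count the $l$ one-bit $\leftarrow$s, the $n-1-l$ two-bit $\rightarrow$s, and $n-l-1$ one-bit $\Rightarrow$s. Your block-boundary count of the $\Rightarrow$s is just a slightly more careful justification of the paper's statement that each non-root internal node contributes one $\Rightarrow$; the arithmetic and the decomposition are identical.
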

\begin{proof}
The number of $\rightarrow$s and $\leftarrow$s used in the code is exactly equal to the total number of nodes minus the root. However, we use two bits for each $\rightarrow$, which shows non-leaf nodes. Therefore, the $\rightarrow$s and $\leftarrow$s use $2(n-1)-l$ bits in total. Additionally, for every node that has at least one child (except for the root), we will have a $\Rightarrow$ in the code. This would be equal to $n-1-l$ bits. Thus, we will have $3n-2l-3$ bits in total.
\end{proof}

\section{TreeExplorer}
\label{hybrid}

To see in which scenarios TD performs better than PC we can write
\begin{equation}
\label{inequa}
    3n-2l-3<n+2l-3 \Rightarrow l>n/2.
\end{equation}
Based on Eq. (\ref{inequa}), PC works better when the number of leaves is less than $n/2$, and TD works better otherwise. They exhibit the same performance when the tree has exactly $n/2$ leaves. Based on this, we propose the following coding technique for rooted tree structures.

\textbf{TreeExplorer:} Firstly, the number of leaves of the rooted tree structure is counted ($l$). If $l< n/2$, the structure is coded with PC. The code is then prefixed with a 0 to specify that it has been coded using PC. Otherwise, the structure is coded using TD, and the code is prefixed with a 1.

It can easily be shown that the codes created using TreeExplorer are uniquely decodable. This is because the first bit of the code uniquely determines the coding method, and we have already proven that both PC and TD are uniquely decodable. The following theorem shows the upper bound for the average code length of TreeExplorer.

\begin{theorem}
For any probability distribution on rooted tree structures with $n$ nodes, the average code length of TreeExplorer is less than $2n-2$.
\end{theorem}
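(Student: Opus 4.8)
The plan is to reduce the claim to a bound on the length of a single codeword, since the average code length is linear in the per-tree lengths. Fix a tree structure $t$ with $n$ nodes and $l = l(t)$ leaves. TreeExplorer emits one prefix bit and then either the PC codeword, of length $n + 2l - 3$ by Theorem \ref{PClen}, or the TD codeword, of length $3n - 2l - 3$ by the tunnel-digging length theorem. The selection rule ``$l < n/2 \Rightarrow$ PC, otherwise TD'' always picks the shorter of the two (comparing $n+2l-3$ with $3n-2l-3$ gives exactly the threshold $l = n/2$), so the total codeword length is
\begin{equation}
    L(t) \;=\; 1 + \min\bigl(\, n + 2l - 3,\ 3n - 2l - 3 \,\bigr).
\end{equation}
Hence, for any probability distribution $P$ on $n$-node tree structures, the average code length equals $\sum_t P(t)\, L(t) = 1 + \sum_t P(t)\,\min\bigl(n+2l(t)-3,\ 3n-2l(t)-3\bigr)$, and it suffices to bound the inner $\min$ uniformly in $l$.

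The key step is this distribution-free bound. View $f(l) = n + 2l - 3$ and $g(l) = 3n - 2l - 3$ as functions of a real variable: $f$ is strictly increasing, $g$ is strictly decreasing, and they coincide only at $l^\star = n/2$, where $f(l^\star) = g(l^\star) = 2n - 3$. Therefore $\min\bigl(f(l), g(l)\bigr) \le 2n - 3$ for every $l$, with equality only at $l = n/2$. Substituting back, $L(t) \le 1 + (2n-3) = 2n-2$ for every tree $t$, and averaging over $P$ already gives the bound $2n-2$ on the average code length.

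It remains to argue the strict inequality, which is the one delicate point. The number of leaves is an integer, so when $n$ is odd we always have $l(t) \neq n/2$; the crossover value is never attained, in fact $\min\bigl(f(l(t)), g(l(t))\bigr) \le 2n - 4$, so $L(t) \le 2n - 3$ and the average is strictly below $2n - 2$. When $n$ is even, $L(t) = 2n-2$ holds precisely for trees with exactly $n/2$ leaves and $L(t) \le 2n - 3$ otherwise, so the average is again strictly below $2n-2$ unless $P$ is concentrated entirely on trees with $n/2$ leaves. The substantive content of the proof is the crossover computation of the previous paragraph together with linearity of expectation; the only subtlety is recognising that the $2n-2$ bound is tight exactly at the PC/TD switching point $l = n/2$ and then invoking integrality (or the support of $P$) to push the expectation strictly below it.
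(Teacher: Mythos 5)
Your proof is correct and follows essentially the same route as the paper's: bound the codeword length pointwise via the crossover of the PC and TD lengths at $l=n/2$ (each is at most $2n-3$ at the switching point, plus one prefix bit), then take the expectation. Where you differ is that you are more careful at the boundary than the paper itself. The paper's proof asserts $\mathbb{E}_{l\geq n/2}[3n-2l-2] < 2n-2$, but at $l=n/2$ one has $3n-2l-2 = 2n-2$ exactly, so for even $n$ a distribution concentrated on trees with exactly $n/2$ leaves (e.g.\ for $n=4$, a root with one child that has two leaf children) attains average length exactly $2n-2$. The strict inequality in the theorem statement therefore fails in that corner case and should be $\leq$; your diagnosis that strictness can only fail when $P$ is supported entirely on trees with $n/2$ leaves is precisely right, and is a point the paper's own argument glosses over.
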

\begin{proof}
Assume that the probability of the number of leaves ($l$) being less than $n/2$ is $q$. If $L$ is the length of the code produced using TreeExplorer, we can write
\begin{align*}
    \mathbb{E}[L] & = q\mathop{\mathbb{E}}_{l<n/2}[n+2l-2]+(1-q)\mathop{\mathbb{E}}_{l\geq n/2}[3n-2l-2]\\
    &< q(2n-2)+(1-q)(2n-2) \\
    & = 2n-2.
\end{align*}
\end{proof}

\section{Results and comparison}
\label{results}
In this section, we provide some results from simulating TreeExplorer. For all the simulations in this section, the average code length is calculated by uniformly creating a pool of sample rooted tree structures, coding them, and then averaging the code length.

\subsection{Comparison with entropy}

We compare the average code length of TreeExplorer with the entropy of the uniform source, which was calculated in section \ref{sources}. The result is plotted in Fig. \ref{entropy_uni}. It can be seen that the performance of TreeExplorer is very close to the entropy of the source, which is the optimal compression limit.

\begin{figure}
    \centering
     \begin{subfigure}[b]{\columnwidth}
         \centering
         \includegraphics[width=0.7\columnwidth]{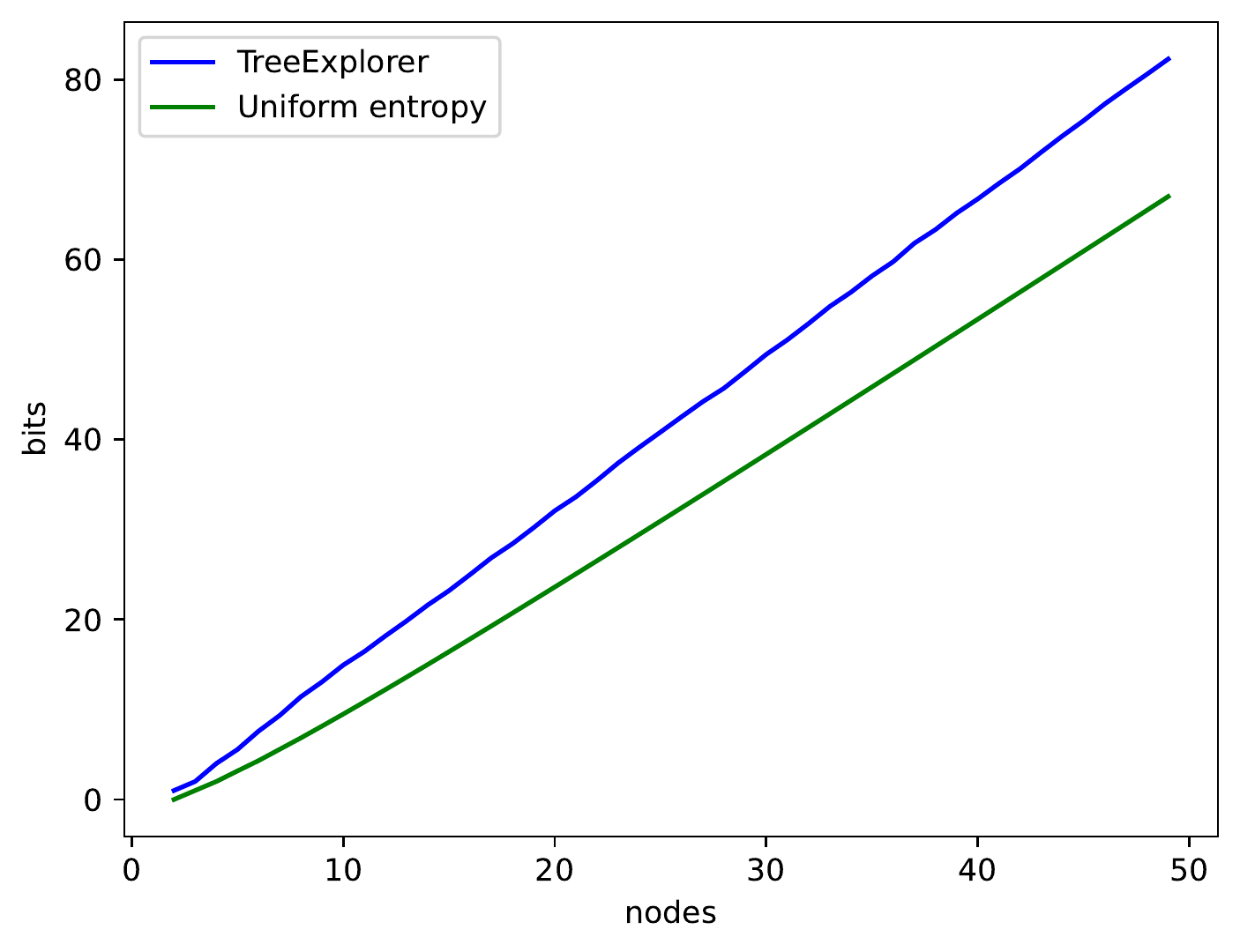}
         \caption{Average length}
         \label{length}
     \end{subfigure}
     \hfill
     \begin{subfigure}[b]{\columnwidth}
         \centering
         \includegraphics[width=0.7\columnwidth]{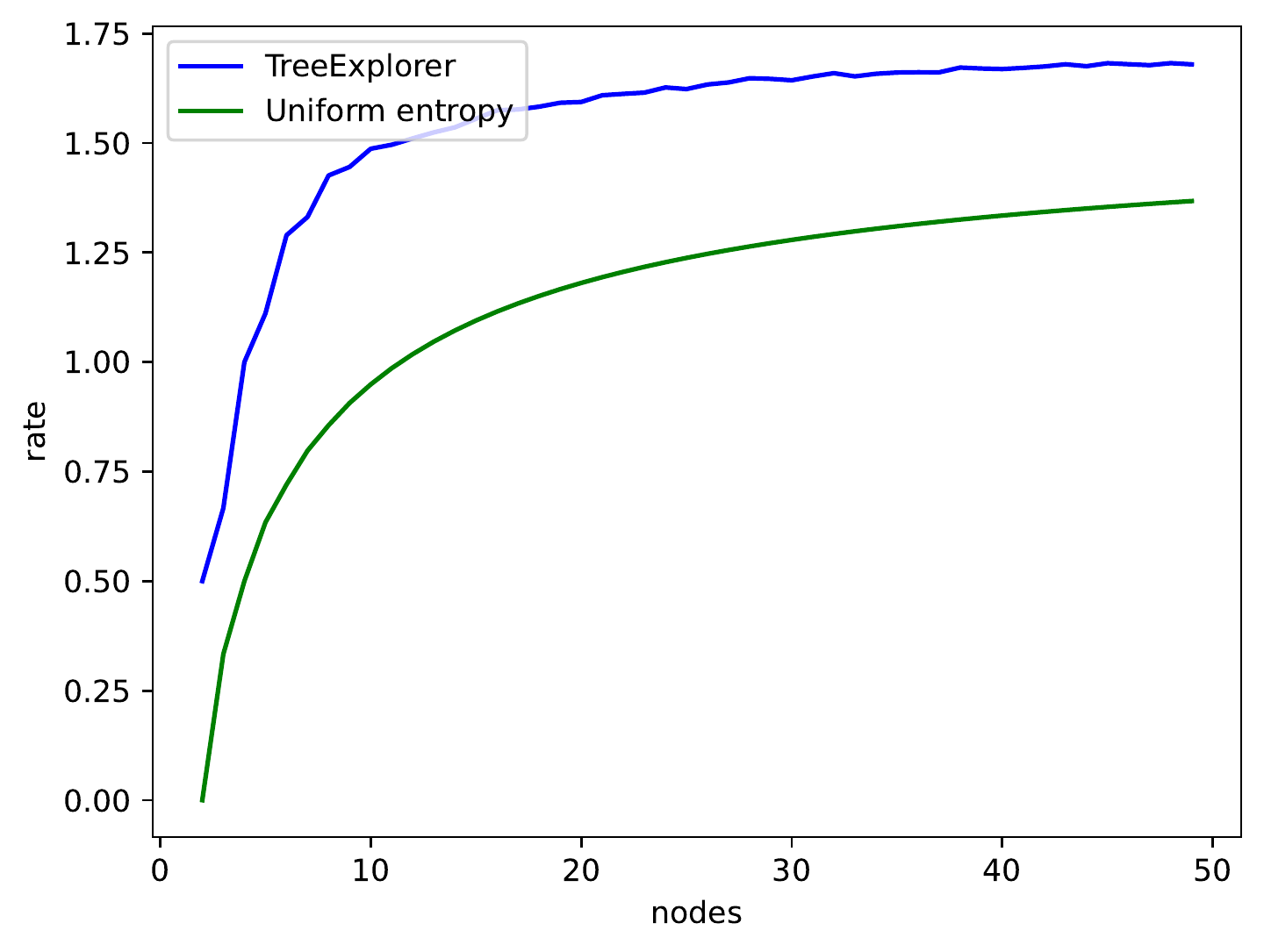}
         \caption{Average rate of change}
         \label{rate}
     \end{subfigure}
     \hfill
     \caption{Comparing the performance of TreeExplorer on a uniform unlabeled unordered rooted tree source with its entropy. Subfigure \ref{length} shows the average bit length, and subfigure \ref{rate} shows its average rate of change.}
    \label{entropy_uni}
\end{figure}

\subsection{Adjacency list}

The adjacency list representation is one of the most widely used methods for storing trees. This method uses $2n\lceil \log_2 n\rceil$ bits to represent a tree. Fig. \ref{adj list} compares the performance of adjacency list with TreeExplorer. Notice that for coding a labeled tree using TreeExplorer, an additional $n\lceil \log_2 n\rceil$ bits are needed to list the node labels in the order of their appearance.

\begin{figure}
    \centering
    \includegraphics[width = 0.7\columnwidth]{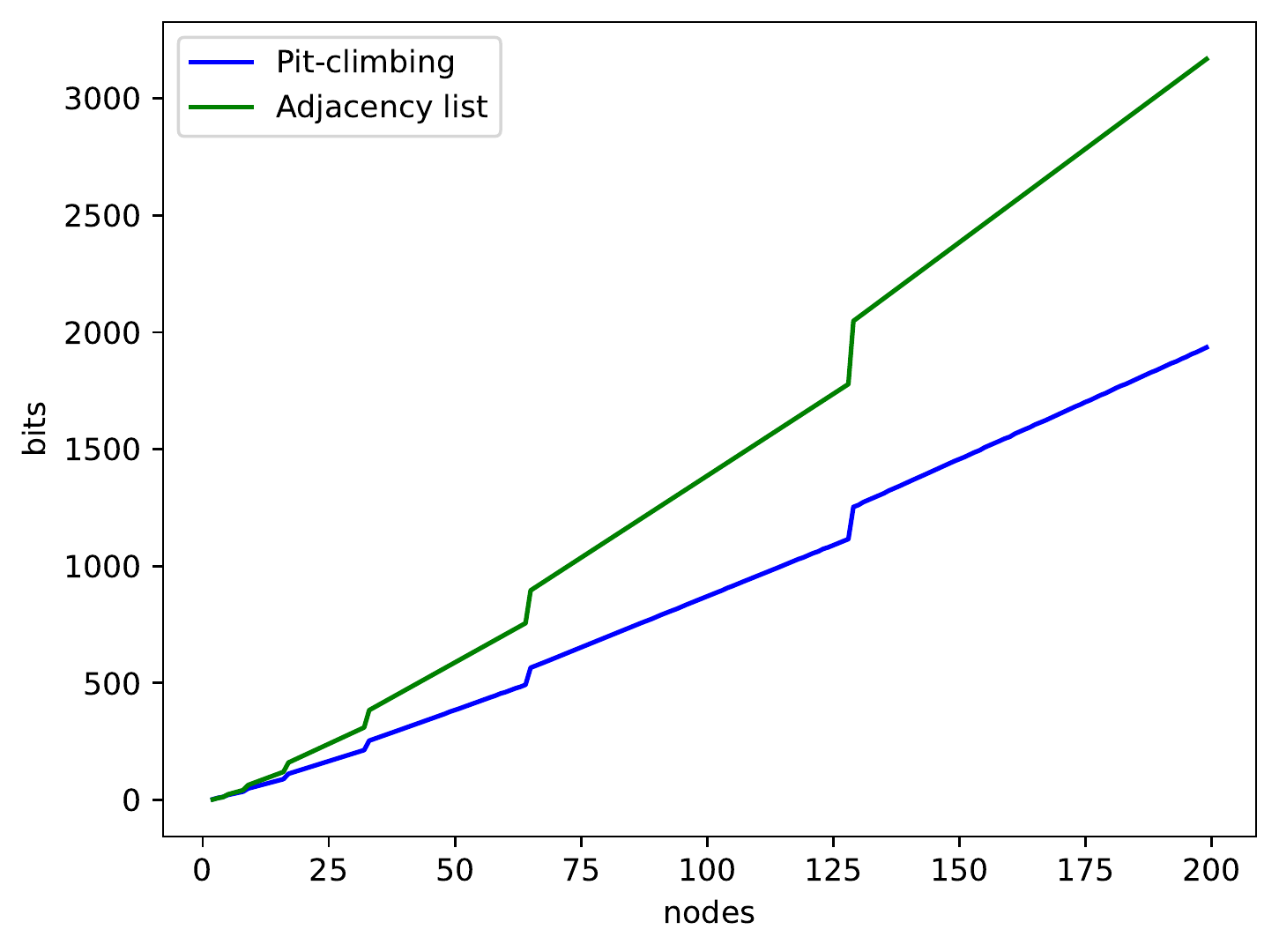}
    \caption{Comparing the performance of TreeExplorer with adjacency list}
    \label{adj list}
\end{figure}

\subsection{The Newick format}

The Newick format has been the standard for representing phylogenetic trees since its introduction back in 1986 \cite{cardona2008extended}. In this method, trees are represented using parentheses and commas. This format starts from the root of the tree, and lists the children and subtrees of the root in a nested manner. We will not go into further details on how this method works. However, because of the similarities between this format and TreeExplorer, we compare the performance of these two methods.

Our calculations show an additional $n+1$ bits and $3n-4l+1$ bits in the Newick format compared to when TreeExplorer uses TD and PC, respectively. Fig. \ref{Newick} compares the average codeword lengths of TreeExplorer and the Newick format for rooted trees of up to 50 nodes. It can be seen that the figure also confirms that TreeExplorer provides us with a shorter codeword length.

\begin{figure}[h!]
    \centering
    \includegraphics[width = 0.7\columnwidth]{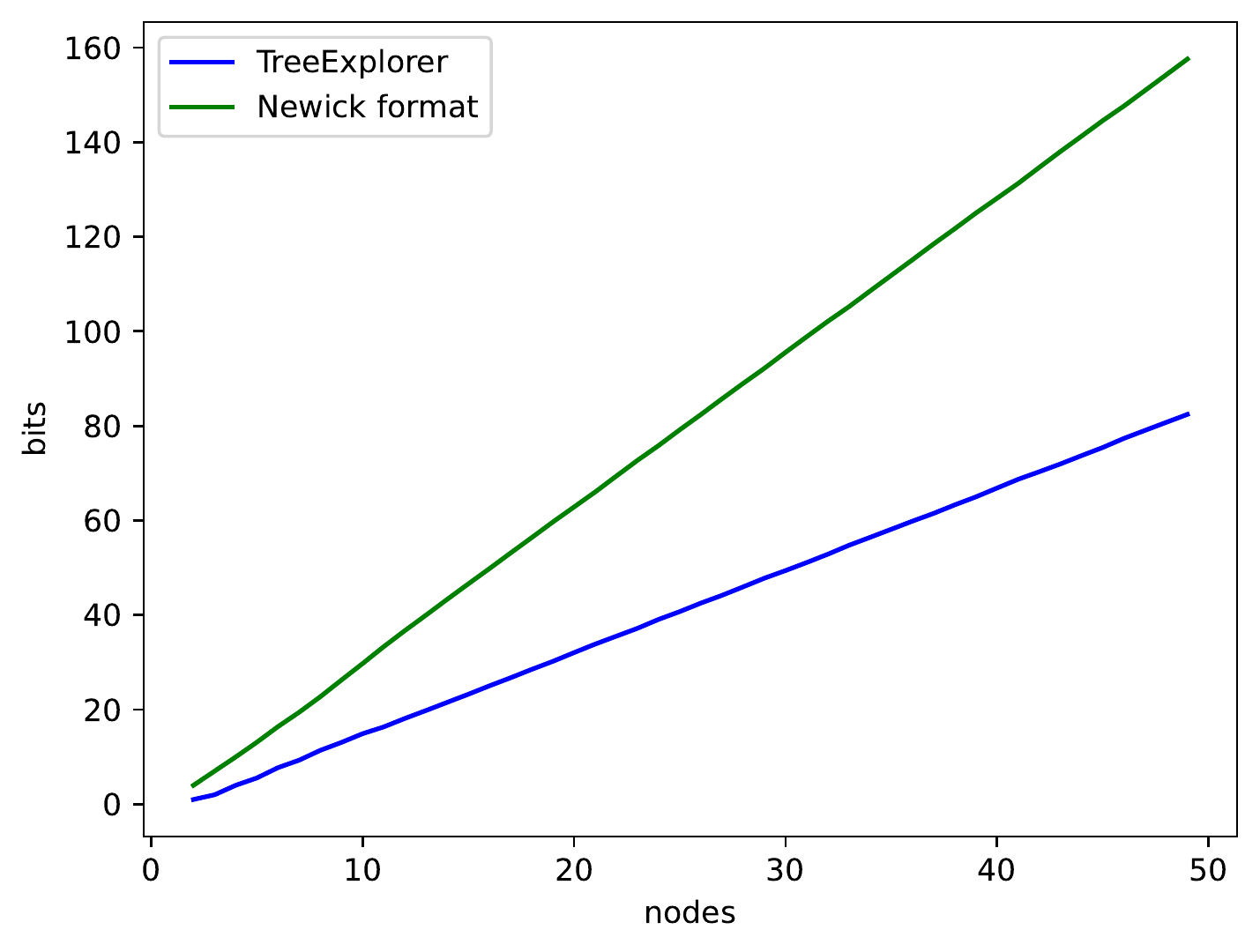}
    \caption{Comparing the performance of TreeExplorer with the Newick format}
    \label{Newick}
\end{figure}

\section{Application to network routing}

As stated earlier, one of the main motivations behind introducing a new coding algorithm for rooted trees is its application to routing protocols in ad hoc and wireless networks. In this section, we will explain how TreeExplorer can be applied to the existing routing protocols and improve their performance.

There are two sides to the information that we want to store or communicate about a routing table: the structure of the underlying rooted tree, and the labels of the nodes in the tree. Our proposed method of coding a routing table is to first code the underlying unlabeled structure using TreeExplorer, and then follow that with the labels of the nodes in the tree, in the order in which they are traversed in TreeExplorer. Fig. \ref{message} shows how a message packet will look like if we code the routing table using the proposed method.

\begin{figure}[h!]
    \centering
    \includegraphics[width = \columnwidth]{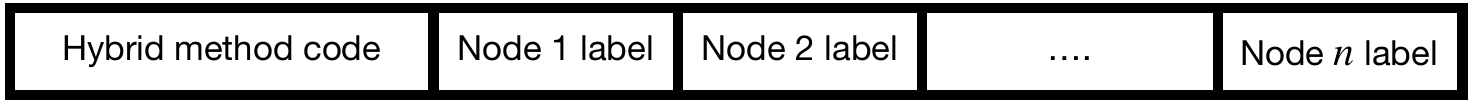}
    \caption{A message packet containing a routing table coded using TreeExplorer}
    \label{message}
\end{figure}

Path-vector routing protocol uses protocol messages to communicate the shortest paths between nodes. Each of these messages contains all the hops in the shortest path from a destination node to a source node. This will result in a redundancy when describing different paths which have mutual edges. For example, if node B is included in the shortest path from node A to node C, then the shortest path from A to B will be repeated in at least two protocol messages, one for the shortest path from A to B, and the other one for the shortest path from A to C. This redundancy will not exist if we communicate the routing table using a single rooted tree.

Another important note is that many of the changes to networks happen either to the underlying topology or the node labels. Having the description for these two separately gives us the freedom to update them individually. For example, if there is a change in the structure but not the node labels, the structure can easily and efficiently be updated using TreeExplorer.
\balance

\section{Conclusion}
In this paper, TreeExplorer was introduced as a novel coding technique for unlabeled rooted trees. In essence, TreeExplorer is a hybrid of two other methods, pit-climbing and tunnel-digging. TreeExplorer chooses one of these two methods based on the number of leaves of the structure that it is supposed to code. Even though TreeExplorer was mainly introduced for unordered trees, it can easily be extended to ordered trees as well. For an ordered tree, the order in which the tree will be explored will be the same as the order of the branches. Simulation results exhibited a near-optimal performance in terms of the code length, and it was shown that TreeExplorer has a better performance than simliar existing techniques. Finally, it was discussed how TreeExplorer can directly impact and enhance ad hoc and wireless routing protocols. This can be beneficial as significant communication resources are currently being used for these protocols, and the rapidly growing number of devices in the network is making routing protocols much more challenging than before.

\section*{Acknowledgment}

This research was funded in whole or in part by EPSRC grant number EP/T02612X/1. For the purpose of Open Access, the author has applied a CC BY public copyright licence to any Author Accepted Manuscript (AAM) version arising from this submission. We also thank Moogsoft inc. for their support in this research project.

\bibliographystyle{ieeetr}
\bibliography{bib}

\end{document}